\documentclass[a4,12pt]{article}%

\usepackage{graphicx}
\usepackage{graphics}

\usepackage{amsmath}
\usepackage{amsfonts}
\usepackage{amssymb}
\usepackage{enumerate}
\usepackage{xcolor}
\usepackage{tikz}
\usepackage{textcomp} 
\definecolor{Green}{rgb}{0.0, 0.5, 0.0}

\definecolor{dgreen}{rgb}{0.0, 0.5, 0.0}
\newtheorem{theorem}{Theorem}[section]

\newtheorem{conjecture}[theorem]{Conjecture}

\newtheorem{definition}[theorem]{Definition}

\newtheorem{lemma}[theorem]{Lemma}

\newtheorem{notation}[theorem]{Notation}

\newtheorem{remark}[theorem]{Remark}

\newenvironment{proof}[1][Proof]{\textbf{#1.} }{\ \rule{0.5em}{0.5em}}

\newcommand{\numbercellong}[2]

\usepackage{dirtytalk}

\usepackage[english]{babel}

\usepackage[letterpaper,top=2cm,bottom=2cm,left=3cm,right=3cm,marginparwidth=1.75cm]{geometry}

\usepackage{amsmath}
\usepackage{graphicx}
\usepackage[colorlinks=true, allcolors=blue]{hyperref}

\title{A Weak Structural Form of Commutative Equivalence in
Finite Codes}
\author{Dean Kraizberg\thanks{School of Mathematical Sciences, Tel Aviv University, deank@mail.tau.ac.il}}

\begin{document}

\maketitle
\begin{abstract}
   We investigate the structural relationship between prefix-free codes over the binary alphabet and a class of unlabeled rooted trees, which we call \emph{symmetric} trees. We establish a canonical correspondence between prefix-free codes and symmetric trees, preserving not only the lengths of codewords but also some additional commutative structure. Using this correspondence, we provide a result related to the commutative equivalence conjecture. We show that for every code, there exists a prefix-free code such that, for each fixed word length, the sums of powers of two determined by the occurrences of a distinguished symbol are equal. 
\end{abstract}
\noindent{\em  Keywords:} Prefix Free Codes, Symmetric Trees, Commutative Equivalence, Free Monoid.

\noindent{\em MSC2020: } 	68R15, 20M05, 94A45. 

\section{Introduction}
\subsection{Some Background on Codes}

For basic exposition in coding theory, we refer the reader to~\cite{CodingTheoryExpo}. We introduce only the necessary notation and definitions. Throughout, we work over the binary alphabet $\{a,b\}$.

\begin{notation}
We denote by $\{a,b\}^{<\mathbb{N}}$ the free monoid generated by $\{a,b\}$; that is, the set of all finite words over $\{a,b\}$:
\[
\{a,b\}^{<\mathbb{N}}
=
\left\{
\langle x_1,\dots,x_n \rangle : n \ge 1,\ x_i \in \{a,b\}
\right\}
\cup \{\langle \rangle\},
\]
where $\langle \rangle$ denotes the empty word.

We use angle brackets to denote words when it is necessary to emphasize their sequence structure. For a word $w = \langle x_1,\dots,x_n \rangle \in \{a,b\}^{<\mathbb{N}}$, we define its length by $\lvert w \rvert = n$.

Given two words $w_1 = \langle x_1,\dots,x_n \rangle$ and $w_2 = \langle y_1,\dots,y_m \rangle$, their concatenation is defined by
\[
w_1 w_2 = \langle x_1,\dots,x_n, y_1,\dots,y_m \rangle.
\]
\end{notation}

\medskip

We now recall the notion of a (uniquely decodable finite) code. From this point onward, we shall refer to such object simply as code.

\begin{definition}
A \emph{code} is a finite set $C \subseteq \{a,b\}^{<\mathbb{N}}$ of finite words such that the following property holds: whenever
\[
w_1 \cdots w_n = w'_1 \cdots w'_m,
\]
with $w_1,\dots,w_n, w'_1,\dots,w'_m \in C$, then $n = m$ and $w_i = w'_i$ for all $1 \le i \le n$.
\end{definition}

\medskip

A fundamental constraint on the lengths of codewords is given by the classical Kraft–McMillan theorem. This result asserts not only that the lengths of codewords in any code must satisfy a certain constraint, but also that every sequence satisfying this condition arises from the lengths of codewords of a prefix-free code.
\begin{theorem}[Kraft--McMillan]
Let $C$ be a code over an alphabet of size $k$. Then
\[
\sum_{w \in C} k^{-\lvert w \rvert} \leq 1.
\]
Conversely, let $A$ be a finite multiset of positive integers satisfying
\[
\sum_{a \in A} k^{-a} \leq 1.
\]
Then there exists a prefix-free code $C$ such that the multiset of codeword lengths of $C$ is precisely $A$, that is, $\{ \lvert c \rvert : c \in C \} = A$ as multisets.
\end{theorem}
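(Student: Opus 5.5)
We prove the two directions separately. For the first inequality (McMillan's direction) we use the classical power argument. Let $C$ be a code with maximal codeword length $L$, and set $S=\sum_{w\in C}k^{-\lvert w\rvert}$. For each $n\ge 1$ we expand
\[
S^n=\sum_{(w_1,\dots,w_n)\in C^n} k^{-(\lvert w_1\rvert+\cdots+\lvert w_n\rvert)}=\sum_{\ell=n}^{nL} N_\ell\, k^{-\ell},
\]
where $N_\ell$ is the number of $n$-tuples of codewords whose concatenation has length $\ell$. Unique decodability is exactly what makes the map $(w_1,\dots,w_n)\mapsto w_1\cdots w_n$ injective, so $N_\ell$ is at most the number of words of length $\ell$, namely $k^\ell$. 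Hence $S^n\le nL$ for every $n$. If $S>1$, the left side grows exponentially while the right side grows linearly, which is a contradiction. Therefore $S\le 1$. This step is routine; the only point to state carefully is the injectivity coming from the definition of a code.

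For the converse we use the greedy (canonical) construction. Sort the multiset as $a_1\le a_2\le\cdots\le a_m$. Identify words of length $\ell$ over a $k$-letter alphabet with integers in $[0,k^\ell)$ written in base $k$ with exactly $\ell$ digits. Define $c_1$ to be the all-zero word of length $a_1$. Inductively, let
\[
c_j=\text{the base-}k\text{ expansion of }\, k^{a_j}\sum_{i<j}k^{-a_i}\ \text{ with } a_j \text{ digits}.
\]
Since each $a_i\le a_j$, the quantity $k^{a_j}\sum_{i<j}k^{-a_i}$ is a nonnegative integer.

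Two verifications are needed. First, the construction is well defined: the integer is strictly less than $k^{a_j}$, because $\sum_{i<j}k^{-a_i}<\sum_{i\le j}k^{-a_i}\le 1$ by hypothesis. Second, the resulting code is prefix-free. Interpret $c_i$ as the $k$-adic interval $I_i=[x_i,x_i+k^{-a_i})$ with $x_i=\sum_{l<i}k^{-a_l}$. A word is a prefix of another exactly when the interval of the longer word lies inside the interval of the shorter one. These intervals are consecutive and pairwise disjoint, and each has positive length, so no $I_j$ can be contained in $I_i$ for $i\ne j$. The step I expect to need the most care is this correspondence between prefixes and nested $k$-adic intervals, in particular checking that $x_j$ is a multiple of $k^{-a_j}$ so that $I_j$ really is the interval of the word $c_j$. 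This follows from the ordering $a_i\le a_j$ for $i<j$, which is why we sort first.

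Finally, the map $j\mapsto c_j$ is injective because the intervals are disjoint, and $\lvert c_j\rvert=a_j$ by construction. So the multiset of codeword lengths is exactly $A$. A prefix-free set is a code, since decoding proceeds greedily from the left. This completes the plan.
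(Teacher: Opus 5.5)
Your proof is correct and complete. The paper does not prove this statement at all: it recalls Kraft--McMillan as classical background with a pointer to the coding-theory literature, so there is no argument in the paper to compare yours against.

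What you wrote is the standard textbook proof. For McMillan's inequality you bound $S^n \le nL$ using injectivity of concatenation on $C^n$. For Kraft's converse you build the canonical code from the sorted lengths via consecutive $k$-adic intervals $[x_j, x_j+k^{-a_j})$.

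The delicate points you flagged are handled correctly:
\begin{enumerate}
\item Sorting makes $k^{a_j}x_j$ an integer, so $I_j$ is exactly the interval of $c_j$.
\item The strict inequality $x_j<1$ guarantees that $a_j$ digits suffice.
\item For prefix-freeness you only need the easy direction ($u$ a prefix of $v$ implies $I_v\subseteq I_u$), combined with disjointness and nonemptiness of the intervals.
\end{enumerate}

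Two small points are worth making explicit. First, a code never contains the empty word, since $\langle\rangle=\langle\rangle\langle\rangle$ violates unique factorization; this justifies starting the sum at $\ell=n$. Starting at $\ell=0$ would give an equally good linear bound anyway. Second, your closing remark that a prefix-free set is a code needs the codewords to be nonempty. That holds here because $A$ consists of positive integers.

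Finally, the McMillan half of your argument is exactly the ingredient the paper invokes, applied to the ternary code $\hat C$, in Lemma~\ref{lemma:mcmillan_3ary}.
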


\medskip

In particular, every code is \emph{length-equivalent} to a prefix-free code; that is, for every code $C$ there exists a prefix-free code $C'$ together with a bijection $f \colon C \to C'$ such that $\lvert w \rvert = \lvert f(w) \rvert$ for all $w \in C$.

\medskip

Perrin and Sch\"{u}tzenberger introduced a stronger notion of equivalence~\cite{CECoriginalConjecture}. For a word $w = \langle x_1 ,\dots, x_n \rangle \in \{a,b\}^{n}$, we define
\[
[w]_a := \#\{1 \le i \le n : x_i = a\}, \qquad [w]_b := \#\{1 \le i \le n : x_i = b\}.
\]

\begin{definition}
Two codes $C_1, C_2$ are said to be \emph{commutatively equivalent} if there exists a bijection $f \colon C_1 \to C_2$ such that for every $w \in C_1$,
\[
[w]_a = [f(w)]_a \quad \text{and} \quad [w]_b = [f(w)]_b.
\]
\end{definition}

\medskip

Originally it was conjectured that every code is commutatively equivalent to a prefix-free code~\cite{CECoriginalConjecture}. However, this conjecture was disproved by Peter Shor~\cite{ShorCounterexample}, who found the code
\[
C_{\mathrm{Shor}} =
b\{1, a, a^7, a^{13}, a^{14} \}
\;\cup\;
\{a^3, a^8\}b\{1, a^2, a^4, a^6\}
\;\cup\;
a^{11}b\{1, a, a^2\},
\]
which is not commutatively equivalent to any prefix-free code. See~\cite{CompCounterexamples} for additional counterexamples. In light of this counterexample, the conjecture was restricted to a special class of codes~\cite{CECmaximalConjecture}, 

\begin{conjecture}[Sch\"{u}tzenberger]
Every \emph{maximal} code—that is, a code $C$ such that for every $w \in \{a,b\}^{<\mathbb{N}} \setminus C$, the set $C \cup \{w\}$ is not a code—is commutatively equivalent to a prefix-free code.
\end{conjecture}

The above conjecture is also referred to as the \emph{Commutative Equivalence Conjecture}. Partial results on the commutative
equivalence conjecture have been found, for example in~\cite{CECoriginalConjecture, PartialResult}. Our main result, Theorem~\ref{theorem:sum_of_powers_equivalence}, is somewhat related to this conjecture, as it establishes the existence of a prefix-free code that preserves certain quantities associated with symbol occurrences within the code. 

\subsection{Symmetric Trees and Main Results}

We now introduce the framework in which we consider trees. This formulation is consistent with the standard graph-theoretic notion of a rooted tree. the reader is referred to~\cite{GraphTheoryBasics} for basic terminology and definitions in graph theory. 

\begin{definition}
A \emph{rooted tree} is a finite, connected, acyclic graph $T$ together with a distinguished vertex $o \in V(T)$, called the \emph{root}. We denote by $\mathrm{dist}$ the usual graph distance on $T$.

Given vertices $u,v \in V(T)$, we say that $u$ is a \emph{child} of $v$ if $u$ and $v$ are adjacent and
\[
\mathrm{dist}(u,o) = \mathrm{dist}(v,o) + 1.
\]

For $v \in V(T) \setminus \{o\}$, we denote by $T_v$ the \emph{subtree rooted at $v$}, consisting of $v$ together with all its descendants.

A \emph{leaf} is a vertex without children---equivalently v is a leaf if and only if $T_v=\{v\}$ is a tree with a single vertex. For $k \in \mathbb{N}$, we define
\[
\mathcal{L}_k(T) := \{ v \in V(T) : v \text{ is a leaf and } \mathrm{dist}(v,o) = k \}.
\]
\end{definition}

\medskip

We next introduce a symmetry condition on rooted trees.

\begin{definition}
Let $T$ be a finite rooted tree with root $o$, and suppose that each vertex has at most three children (that is, $T$ is a finite subtree of the full $3$-ary tree).

We say that $T$ is \emph{symmetric} if for every vertex $v \in V(T)$ with at least two children, there exist distinct children $u_1,u_2$ of $v$ such that the rooted subtrees $T_{u_1}$ and $T_{u_2}$ are identical.
\end{definition}






We now state the main results, which justify the introduction of symmetric trees in the study of prefix-free codes.

\begin{theorem}\label{theorem:correspondence_symmetric_pfc}
There exists a bijective correspondence between prefix-free codes and symmetric trees. Moreover, under this correspondence, if $C$ is a prefix-free code and $T$ is the associated symmetric tree, then for every $k \in \mathbb{N}$,
\[
\lvert \mathcal{L}_k(T) \rvert
=
\sum_{\substack{w \in C \\ \lvert w \rvert = k}} 2^{[w]_a}.
\]
\end{theorem}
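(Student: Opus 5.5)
The plan is to build the correspondence recursively by ``doubling'' every $a$-edge of the usual binary prefix tree. A prefix-free code $C$ is identified with its prefix tree, whose vertices are the prefixes of codewords and whose leaves are exactly the codewords. Trees are taken up to rooted isomorphism, since symmetric trees are unlabeled. The empty code is excluded, or matched by convention; $\{\langle\rangle\}$ corresponds to the one-vertex tree.

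For a nonempty prefix-free code $C\neq\{\langle\rangle\}$, put $C_a=\{w: aw\in C\}$ and $C_b=\{w: bw\in C\}$. Both are prefix-free, at least one is nonempty, and their maximal length is smaller than that of $C$. Define $\Phi(\{\langle\rangle\})$ to be the single-vertex tree. Otherwise define $\Phi(C)$ as a root whose children carry:
\begin{itemize}
\item two copies of $\Phi(C_a)$ if $C_a\neq\emptyset$;
\item one copy of $\Phi(C_b)$ if $C_b\neq\emptyset$.
\end{itemize}
So the root has $1$, $2$ or $3$ children. By induction on the maximal codeword length, $\Phi(C)$ is symmetric. The root has two identical children whenever it has at least two children, since that forces $C_a\neq\emptyset$. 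All other vertices lie in symmetric subtrees.

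For the inverse $\Psi$, take a symmetric tree $T$ and recurse on depth according to the number of children of the root.
\begin{itemize}
\item \emph{No children:} set $\Psi(T)=\{\langle\rangle\}$.
\item \emph{One child $u$:} set $\Psi(T)=b\,\Psi(T_u)$.
\item \emph{Two children $u_1,u_2$:} symmetry forces $T_{u_1}\cong T_{u_2}$, and we set $\Psi(T)=a\,\Psi(T_{u_1})$.
\item \emph{Three children:} symmetry gives a pair $u_1,u_2$ with $T_{u_1}\cong T_{u_2}$. Let $u_3$ be the remaining child and set $\Psi(T)=a\,\Psi(T_{u_1})\cup b\,\Psi(T_{u_3})$.
\end{itemize}
In every case $\Psi(T)$ is prefix-free, because the pieces start with distinct letters.

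The step I expect to need the most care is that $\Psi$ is well defined on isomorphism classes. In the three-child case the choice of the identical pair could be ambiguous. But if two different pairs of children have isomorphic subtrees, then all three subtrees are isomorphic. In that case every choice yields the same code, since the $a$-part and the $b$-part coincide.

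Given this, one checks $\Psi\circ\Phi=\mathrm{id}$ and $\Phi\circ\Psi=\mathrm{id}$ by induction on depth. The first follows because the child count of the root of $\Phi(C)$ records exactly which of $C_a,C_b$ is nonempty. The second follows because $\Phi$ rebuilds the same multiset of child subtrees up to isomorphism.

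For the leaf count, the base case $\{\langle\rangle\}$ gives a single leaf at depth $0$, and $2^{[\langle\rangle]_a}=1$. For the inductive step, the leaves at depth $k$ of $\Phi(C)$ are the depth-$(k-1)$ leaves of its child subtrees. This gives
\[
|\mathcal L_k(\Phi(C))| = 2\,|\mathcal L_{k-1}(\Phi(C_a))| + |\mathcal L_{k-1}(\Phi(C_b))|,
\]
with a term omitted when the corresponding code is empty. By the induction hypothesis the right-hand side equals
\[
\sum_{\substack{w\in C_a\\ |w|=k-1}} 2\cdot 2^{[w]_a}+\sum_{\substack{w\in C_b\\ |w|=k-1}} 2^{[w]_a}=\sum_{\substack{w\in C\\ |w|=k}}2^{[w]_a},
\]
using $[aw]_a=[w]_a+1$ and $[bw]_a=[w]_a$.
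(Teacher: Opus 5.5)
Your proof is correct and builds the same correspondence as the paper. Your recursive $\Phi$ (two copies of $\Phi(C_a)$ and one of $\Phi(C_b)$) is exactly the paper's lift of the prefix tree to words over $\{a,a^{-1},b\}$ through the free-group Cayley graph, and your $\Psi$ is the paper's labeling of the isomorphic pair by $a,a^{-1}$ and the remaining child by $b$, followed by the projection $\mathcal{P}$. Beyond the paper, you derive the leaf count directly by induction instead of citing the external lifting count, and you verify well-definedness on isomorphism classes and that $\Phi$ and $\Psi$ are mutually inverse, both of which the paper leaves implicit.
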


\medskip

We will use this correspondence to establish the following result:

\begin{theorem}\label{theorem:sum_of_powers_equivalence}
For every code $C$, there exists a prefix-free code $C'$ such that for every $k \in \mathbb{N}$, one has
\[
\sum_{\substack{w \in C \\ \lvert w \rvert = k}} 2^{[w]_a}
=
\sum_{\substack{w' \in C' \\ \lvert w' \rvert = k}} 2^{[w']_a}.
\]
\end{theorem}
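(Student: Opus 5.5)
The plan is to pass through Theorem~\ref{theorem:correspondence_symmetric_pfc}. Write $N_k := \sum_{w\in C,\,|w|=k} 2^{[w]_a}$. It suffices to build a prefix-free code $C'$ whose words of each length $k$ have total weight $N_k$, where the weight of a word is $\mu(w):=2^{[w]_a}$. Equivalently, it suffices to build a symmetric tree $T$ with $|\mathcal{L}_k(T)|=N_k$ for all $k$.

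\textbf{Step 1: a weighted Kraft inequality.} Form the code $D$ over the ternary alphabet $\{a_1,a_2,b\}$ consisting of all words obtained from words of $C$ by replacing each occurrence of $a$ independently by $a_1$ or $a_2$. The erasing map $a_1,a_2\mapsto a$ sends a factorization over $D$ to a factorization over $C$. Unique decodability of $C$ then fixes the factorization boundaries, and the letters $a_1,a_2$ are read off directly, so $D$ is a code. It has exactly $N_k$ words of length $k$, so McMillan's inequality for $k=3$ gives $\sum_k N_k 3^{-k}\le 1$. Applying the same argument to the truncated codes $C\cap\{|w|\le K\}$ yields, for every $K$,
\[
F_K := 3^K-\sum_{j<K} N_j\,3^{K-j}\;\ge\; N_K .
\]
Note that $3^{K}$ is the total $\mu$-weight of $\{a,b\}^K$, since $\sum_{|w|=K}2^{[w]_a}=3^K$.

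\textbf{Step 2: greedy construction by length.} Process $k=1,2,\dots$ in increasing order. Maintain the set $\mathcal{F}_k$ of words of length $k$ that do not have any previously chosen codeword as a prefix. Extending a node $u$ of length $m$ to length $k$ multiplies its weight by $3^{k-m}$ in total. Hence the total $\mu$-weight of $\mathcal{F}_k$ is exactly $F_k$, provided each earlier level used exactly weight $N_j$. By Step 1, enough weight is always available. At level $k$ we must select a subset $S_k\subseteq\mathcal{F}_k$ with $\sum_{w\in S_k}\mu(w)=N_k$ exactly. Then $C'=\bigcup_k S_k$ is prefix-free by construction and has the required sums.

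\textbf{Step 3 (main obstacle): exact subset sums.} The weights in $\mathcal{F}_k$ are powers of two. The free region is a union of cones: a free node $u$ of length $m$ contributes weights $\mu(u)2^i$ with multiplicity $\binom{k-m}{i}$. I would use the standard lemma that a multiset of powers of two, listed increasingly, represents every integer up to its total whenever each element is at most $1+{}$(sum of the preceding ones). The danger is a free region all of whose weights are even, or more generally whose small weights are too scarce.

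To avoid this, I would choose $S_k$ by a specific rule and maintain an invariant, rather than choosing arbitrarily. The rule is to spend the largest available weights first, taking whole cones of $a$-heavy words. This leaves free the words ending in, or dominated by, $b$, whose weight equals that of a short prefix; in particular the all-$b$ branch $b^k$, of weight $1$, survives as long as possible. The invariant to prove by induction on $k$ is that $\mathcal{F}_k$, as a multiset of weights, satisfies the "each element $\le 1+$ smaller sum" condition. Propagating this from $\mathcal{F}_k\setminus S_k$ to $\mathcal{F}_{k+1}$ is the step I expect to be delicate. Each surviving node $w$ splits into $wb$ of weight $\mu(w)$ and $wa$ of weight $2\mu(w)$, so the condition should persist because each weight $2\mu(w)$ is preceded by its sibling of weight $\mu(w)$.

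If the invariant proves hard to maintain directly on words, I would run the same greedy argument on symmetric trees via Theorem~\ref{theorem:correspondence_symmetric_pfc}. There one chooses $N_k$ leaves among the free depth-$k$ vertices and must keep the "two identical child subtrees" condition. That condition is exactly the $wa_1/wa_2$ duplication that produced the factor $3$ above.
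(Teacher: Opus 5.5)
There is a genuine gap, and it is exactly Step~3, which carries the whole content of the theorem. Steps~1 and~2 are fine; Step~1 is Lemma~\ref{lemma:mcmillan_3ary}, with a cleaner proof via $D$. The invariant you propose for Step~3 is not just delicate but false. Take $C=\{b,ab\}$, so $N_1=1$ and $N_2=2$. At level $1$ you are forced to pick $b$, since it is the only odd-weight word. Then $\mathcal{F}_2=\{aa,ab\}$ has weights $\{4,2\}$, which violates ``each element $\le 1+$ sum of the smaller ones''. The same happens whenever $C$ contains a power $b^m$: $N_m$ is odd, so $b^m$ must be used, and from then on every free word has even weight. No selection rule, largest-first or otherwise, avoids this. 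Your propagation from $\mathcal{F}_k\setminus S_k$ to $\mathcal{F}_{k+1}$ is actually harmless: if a multiset $X$ of weights satisfies your condition, so does $X\cup 2X$. The real problem is that $\mathcal{F}_k\setminus S_k$ cannot be kept in that form.

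More fundamentally, the only information about $C$ your construction uses is $F_K\ge N_K$, and that cannot suffice. The sequence $(N_1,N_2)=(1,1)$ satisfies it ($F_1=3$, $F_2=6$), yet no prefix-free code realizes it: once $b$ is taken at level $1$, no odd-weight word survives at level $2$. The missing idea is to bring in further consequences of unique decodability of $C$.

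\begin{enumerate}
\item The first is the one the paper uses in Remark~\ref{remark: only b case}: a code contains at most one power of $b$ (since $b^ib^j=b^jb^i$), so at most one $N_k$ is odd.
\item The obstruction then recurs at higher $2$-adic levels. After $b^m$ is used, the only words with exactly one $a$ that can still be placed are $b^iab^j$ with $i<m$, at most one for each $i$. So you also need to know that the residues $N_k \bmod 4$ never demand more than that, and similarly for words with two, three, \dots\ letters $a$.
\end{enumerate}

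A repaired version of your plan would need a graded invariant together with lemmas of this kind. For example: the free multiset represents every multiple of $2^g$ up to its total, with $g$ controlled by the structure of $C$.

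The paper's route is different. It inducts on $|C|$, inserting a longest word $c_M$ into the tree for $C\setminus\{c_M\}$. It uses Lemma~\ref{lemma:subset-sum-exact} when the free depth-$M$ words have at most $[c_M]_a$ letters $a$, and the parity remark when $[c_M]_a=0$. Otherwise it rearranges the existing code at a shorter length, trading two words with $j-1$ letters $a$ for one with $j$, to open a slot. So, unlike your greedy scheme, it is allowed to revise earlier choices.
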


\begin{remark}\label{remark:not same cardinality}
In general, the code $C'$ in Theorem~\ref{theorem:sum_of_powers_equivalence} need not have the same cardinality as $C$. Indeed, this fails in general, as demonstrated by Shor's counterexample. 

To see this, suppose that there existed a prefix-free code $C$ satisfying the conclusion of Theorem~\ref{theorem:sum_of_powers_equivalence} and such that $\lvert C \rvert = \lvert C_{\mathrm{Shor}} \rvert$. Then $C$ and $C_{\mathrm{Shor}}$ would necessarily be commutatively equivalent, contradicting Shor's result.
\end{remark}

\section{Proofs of Main Results}
\subsection{Proof of Theorem~\ref{theorem:correspondence_symmetric_pfc}}

We begin with the proof of the first direction, which follows ideas similar to those appearing in~\cite{PrefixCodesandFreeGroups}. To this end, we introduce several auxiliary notions.

\medskip

We first give an equivalent formulation of rooted labeled trees in terms of finite words. This formulation is consistent with the one given in~\cite{SolanBook}.

\begin{definition}
Let $\mathcal{A}$ be a finite alphabet. A \emph{labeled tree} over $\mathcal{A}$ is a nonempty set $T \subseteq \mathcal{A}^{<\mathbb{N}}$ of finite words such that $T$ is prefix-closed; that is, whenever $w \in T$ and $w'$ is a prefix of $w$, then $w' \in T$.

Elements of $T$ are called \emph{vertices}\footnote{Throughout, vertices—like codewords—are denoted using angle brackets $\langle \cdot \rangle$.}. Given $w, w' \in T$, we say that $w'$ is a \emph{child} of $w$ if $w'$ extends $w$ by exactly one symbol, that is,
\[
\lvert w' \rvert = \lvert w \rvert + 1 \quad \text{and} \quad w \text{ is a prefix of } w'.
\]

For $w \in T$, we define the \emph{subtree rooted at $w$} by
\[
T_w := \{\, w' \in T : w \text{ is a prefix of } w' \,\}.
\]
\end{definition}

\medskip

It is straightforward to verify that this definition is equivalent to the usual graph-theoretic notion of a rooted labeled tree.

\medskip

We next introduce the free group and its associated Cayley graph.

\begin{definition}
Let $\mathcal{A}$ be a finite set. The \emph{free group} $F(\mathcal{A})$ is the group consisting of all reduced words over the alphabet $\mathcal{A} \cup \mathcal{A}^{-1}$, where reduction means that no symbol appears adjacent to its inverse.

In the case $\mathcal{A} = \{a,b\}$, we write $F(\mathcal{A}) = F(a,b)$.
\end{definition}

\begin{definition}
The \emph{Cayley graph} of the free group $F(\mathcal{A})$, denoted $\mathrm{Cay}(F(\mathcal{A}))$, is the graph defined as follows:
\begin{enumerate}
    \item The vertex set is $F(\mathcal{A})$.
    \item For each $w \in F(\mathcal{A})$ and each $s \in \mathcal{A} \cup \mathcal{A}^{-1}$, there is an edge labeled $s$ connecting $w$ to $ws$.
\end{enumerate}
\end{definition}

We now complete the proof.

\medskip

Let $C$ be a prefix-free code over the alphabet $\{a,b\}$. Define
\[
Z_b(C)
:=
\left\{
\langle b, c_1, b, c_2, \dots, b, c_n \rangle
:\;
\langle c_1,\dots,c_n \rangle \in C
\right\},
\]
and let $T_b$ denote the associated labeled tree whose leaves correspond precisely to the elements of $C$.

\medskip

Consider the map
\[
\pi \colon \mathrm{Cay}(F(a,b)) \to \{a,b\}^{<\mathbb{N}},
\qquad
\pi\big(\langle c_1^{\varepsilon_1}, \dots, c_n^{\varepsilon_n} \rangle\big)
=
\langle c_1, \dots, c_n \rangle,
\]
where $c_i \in \{a,b\}$ and $\varepsilon_i \in \{\pm 1\}$ for all $i$.

\medskip

We consider the lifted subtree $\pi^{-1}(T_b)$ inside the Cayley graph $\mathrm{Cay}(F(a,b))$. It is shown in~\cite{PrefixCodesandFreeGroups} that the induced subtree consisting of vertices whose words have the symbol $b$ in every odd position corresponds to a prefix-free code over an alphabet of size $3$.

Let $T$ be the tree obtained by restricting this subtree to its even coordinates. More explicitly, $T$ consists of all words $\langle c'_1, \dots, c'_n \rangle$ over the (formally) free alphabet $\{a,a^{-1},b\}$ such that
\[
\langle b, c'_1, b, c'_2, \dots, b, c'_n \rangle
\in \pi^{-1}(T_b).
\]

\medskip

We claim that the underlying (unlabeled) tree $T$ is symmetric.

Indeed, let $w \in T$ be a vertex with at least two children. Then there exist distinct elements $s_1, s_2 \in \{a,b\} \cup \{a^{-1}, b^{-1}\}$ such that $ws_1, ws_2 \in T$. We first note that none of these extensions can involve $b^{-1}$, since this would produce a reduced word in $\pi^{-1}(T_b)$ containing a subword of the form $bb^{-1}$, which is impossible.

Moreover, at most one of $s_1, s_2$ can be equal to $b$. Consequently, without loss of generality, we may assume that $s_1 = a$. By symmetry of the construction, it follows that $wa^{-1} \in T$ as well. 

It follows that the subtrees rooted at $wa$ and $wa^{-1}$ are isomorphic (ignoring edge labels), and hence $T$ satisfies the symmetry condition.

\medskip

Finally, it is shown in~\cite{PrefixCodesandFreeGroups} that the number of leaves of $T$ that are mapped by $\pi$ to a given leaf
\[
\langle b, c_1, \dots, b, c_k \rangle \in T_b,
\qquad
\text{where } \langle c_1,\dots,c_k \rangle = w \in C,
\]
is precisely
\[
2^{\lvert \{ i \in [k] : c_i \neq b \} \rvert}
=
2^{[w]_a}.
\]
Therefore,
\[
\lvert \mathcal{L}_k(T) \rvert
=
\sum_{\substack{w \in C \\ \lvert w \rvert = k}} 2^{[w]_a},
\]
as required.

\medskip

We now prove the converse direction.

\medskip

Let $T$ be a symmetric rooted tree. We construct a labeling of $T$ by elements of the free monoid $\{a,a^{-1},b\}^{<\mathbb{N}}$.

\medskip

We proceed inductively. Assign to the root $o$ the empty word, denoted $\langle \rangle$.

\medskip

Suppose that a vertex $v \in V(T)$ has already been assigned a label $w_v \in \{a,a^{-1},b\}^{<\mathbb{N}}$. We describe how to label its children.

\begin{itemize}
    \item If $v$ has three children $u_1,u_2,u_3$, and two of them (say $u_1,u_2$) satisfy $T_{u_1} \cong T_{u_2}$, then assign
    \[
    w_{u_1} := w_v a, 
    \qquad
    w_{u_2} := w_v a^{-1}, 
    \qquad
    w_{u_3} := w_v b.
    \]

    \item If $v$ has exactly two children $u_1,u_2$ with $T_{u_1} \cong T_{u_2}$, then assign
    \[
    w_{u_1} := w_v a,
    \qquad
    w_{u_2} := w_v a^{-1}.
    \]

    \item If $v$ has a single child $u$, assign
    \[
    w_u := w_v b.
    \]
\end{itemize}

\medskip

This labeling defines an embedding of $T$ into the free monoid $\{a,a^{-1},b\}^{<\mathbb{N}}$, which we continue to denote by $T$ by a slight abuse of notation.

\medskip

Consider now the projection map
\[
\mathcal{P} \colon \{a,a^{-1},b\}^{<\mathbb{N}} \to \{a,b\}^{<\mathbb{N}},
\qquad
\mathcal{P}\big(\langle a \rangle\big) = \mathcal{P}\big(\langle a^{-1} \rangle\big) =a, \ \mathcal{P}\big(\langle b \rangle\big) =b, \ \mathcal{P}(w_1 \cdot w_2) = \mathcal{P}(w_1)\mathcal{P}(w_2)
\]
It is straightforward to verify that $\mathcal{P}(T)$ is a labeled tree whose set of leaves forms a prefix-free code $C$, such that,
\[
\lvert \mathcal{L}_k(T) \rvert
=
\sum_{\substack{w \in C \\ \lvert w \rvert = k}} 2^{[w]_a},
\]
as required. 
$\blacksquare$

\begin{remark}
The above construction shows, in particular, that a symmetric tree admits a canonical labeling by elements of $F(a,b)$, and hence determines a unique associated prefix-free code.
\end{remark}

\subsection{Proof of Theorem~\ref{theorem:sum_of_powers_equivalence}}
We begin by stating two auxiliary lemmas.

\begin{lemma}\label{lemma:mcmillan_3ary}
Let $C$ be a code. Then
\[
\sum_{c \in C} 3^{-\lvert c \rvert}\, 2^{[c]_a} \le 1.
\]
\end{lemma}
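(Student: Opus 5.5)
We prove the lemma by the standard McMillan counting argument, with a weighted length function in place of the plain length. For a word $w \in \{a,b\}^{<\mathbb{N}}$ define the weight
\[
\mu(w) := 3^{-\lvert w \rvert}\, 2^{[w]_a}.
\]
This weight is multiplicative, $\mu(w_1w_2)=\mu(w_1)\mu(w_2)$, because both $\lvert\cdot\rvert$ and $[\cdot]_a$ are additive under concatenation. It also satisfies
\[
\sum_{w \in \{a,b\}^n} \mu(w) = 3^{-n}\sum_{j=0}^n \binom{n}{j}2^j = 3^{-n}(1+2)^n = 1
\]
for every $n$. So $\mu$ is a probability distribution on each layer $\{a,b\}^n$: it is the distribution obtained by sending $a \mapsto$ probability $2/3$ and $b \mapsto$ probability $1/3$. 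The claim is therefore the Kraft--McMillan inequality for the Bernoulli measure with these letter weights.

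Set $S := \sum_{c\in C}\mu(c)$ and let $L := \max_{c\in C}\lvert c\rvert$. For $m \ge 1$, multiplicativity gives
\[
S^m = \sum_{(c_1,\dots,c_m)\in C^m} \mu(c_1\cdots c_m).
\]
Since $C$ is a code, the concatenation map $(c_1,\dots,c_m) \mapsto c_1\cdots c_m$ is injective on $C^m$. Its image consists of words of lengths between $m$ and $mL$; the empty word cannot lie in a code. Grouping the words by length $n$ and applying the layer identity, we get
\[
S^m \le \sum_{n=m}^{mL} \sum_{w\in\{a,b\}^n}\mu(w) \le mL.
\]
Hence $S \le (mL)^{1/m}$ for all $m$. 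Letting $m\to\infty$ gives $S\le 1$.

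The only point requiring care is to replace McMillan's count of words of a given length ($k^n$ words, each of weight $k^{-n}$) by the identity $\sum_{\lvert w\rvert=n}\mu(w)=1$. This identity is exactly the binomial computation above, and it is what makes the weight $2^{[w]_a}$ against the base $3$ fit. Everything else is routine.

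In view of Theorem~\ref{theorem:correspondence_symmetric_pfc}, this is presumably what gets combined with the Kraft--McMillan theorem for $k=3$. The lemma says the multiset of leaf depths $\{\lvert c\rvert$ with multiplicity $2^{[c]_a}\}$ satisfies the ternary Kraft inequality.
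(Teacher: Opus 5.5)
Your proof is correct and complete. Multiplicativity of $\mu$, the binomial identity $\sum_{w\in\{a,b\}^n}\mu(w)=1$, injectivity of concatenation on $C^m$ (the code property with equal numbers of factors) and $\langle\rangle\notin C$ together give $S^m\le m(L-1)+1$, hence $S\le 1$.

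The paper takes a different route, by reduction. It lifts $C$ to the ternary set $\hat C\subseteq\{a,a^{-1},b\}^{<\mathbb{N}}$, obtained by replacing each occurrence of $a$ independently by $a$ or $a^{-1}$. In effect $\hat C=\mathcal{P}^{-1}(C)$, though the paper phrases it via $\pi^{-1}(T_b)$. It shows $\hat C$ is a code by projecting two factorizations with $\mathcal{P}$, and then applies the ordinary McMillan inequality over a $3$-letter alphabet as a black box. Each $c\in C$ has exactly $2^{[c]_a}$ lifts, all of length $\lvert c\rvert$, so the ternary Kraft sum of $\hat C$ equals $\sum_{c\in C}3^{-\lvert c\rvert}2^{[c]_a}$.

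The two proofs are really one computation. Your identity $\sum_{\lvert w\rvert=n}2^{[w]_a}=3^n$ says exactly that $\mathcal{P}$ maps the $3^n$ ternary words of length $n$ onto $\{a,b\}^n$ with fibres of size $2^{[w]_a}$. Unfolding McMillan's proof for $\hat C$ and collapsing these fibres reproduces your estimate.

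The paper's version fits its framework. $\hat C$ is the same lifted object that produces the symmetric tree, so the lemma reads literally as a ternary Kraft inequality for leaf depths counted with multiplicity $2^{[c]_a}$. That is how it is used to bound $\lvert\mathcal{L}_M(Ex_M(T_0))\rvert$.

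Your version is self-contained and needs no free-group machinery. It also avoids reinterpreting $T_b$, which the paper defines only for prefix-free codes. Finally, it applies verbatim to arbitrary Bernoulli weights, giving $\sum_{c\in C}p_a^{[c]_a}p_b^{[c]_b}\le 1$ whenever $p_a,p_b>0$ and $p_a+p_b\le 1$.
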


\begin{proof}
Using the notation from the proof of Theorem~\ref{theorem:correspondence_symmetric_pfc}, define
\[
\hat{C}
=
\left\{
\langle c'_1, \dots, c'_n \rangle :
\langle b, c'_1, b, c'_2, \dots, b, c'_n \rangle \in \pi^{-1}(T_b)
\right\}
\subseteq \{a,a^{-1},b\}^{<\mathbb{N}}.
\]
The key observation is that $\hat{C}$ forms a code over an alphabet of size~$3$.

\medskip

Indeed, suppose that a word admits two distinct factorizations into elements of $\hat{C}$:
\[
w_1 \cdots w_n = w'_1 \cdots w'_m,
\qquad
w_1,\dots,w_n,\, w'_1,\dots,w'_m \in \hat{C}.
\]
Applying the projection $\mathcal{P}$ to both sides yields two factorizations of the same word into elements of $C$. Since $C$ is a code, it follows that
\[
n = m
\quad \text{and} \quad
\mathcal{P}(w_i) = \mathcal{P}(w'_i)
\quad \text{for all } 1 \le i \le n.
\]
Since $w_1 \cdots w_n = w'_1 \cdots w'_n$,
we conclude that $w_i = w'_i$ for every $1 \le i \le n$, contradicting the assumption that the two factorizations are distinct. Thus, $\hat{C}$ is a code.

\medskip

The desired inequality now follows from McMillan's theorem applied to $\hat{C}$, viewed as a code over a $3$-letter alphabet, together with the same counting argument as in the proof of Theorem~3 in~\cite{PrefixCodesandFreeGroups}.
\end{proof}

\begin{lemma}\label{lemma:subset-sum-exact}
Let $A$ be a finite multiset of natural numbers, and let $N \in \mathbb{N}$. Suppose that 
\[
\sum_{n \in A} 2^n \ge 2^N \quad \text{and} \quad n \le N \text{ for all } n \in A.
\] 
Then there exists a sub-multiset $A' \subseteq A$ such that $\sum_{n \in A'} 2^n = 2^N$.
\end{lemma}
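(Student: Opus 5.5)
We argue by a greedy selection: sort the elements of $A$ in non-increasing order, $n_1 \ge n_2 \ge \cdots \ge n_r$, and take the shortest prefix $A'=\{n_1,\dots,n_j\}$ whose sum reaches at least $2^N$. Such a $j$ exists because the total sum is at least $2^N$. The claim is that the partial sum $S_j=\sum_{i\le j}2^{n_i}$ equals $2^N$ exactly.

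The key invariant is a divisibility property. For every $i \le j$, the partial sum $S_{i-1}=\sum_{l<i}2^{n_l}$ is a multiple of $2^{n_i}$. This holds because each earlier term $2^{n_l}$ has $n_l\ge n_i$, so it is divisible by $2^{n_i}$. Also $2^N$ is a multiple of $2^{n_i}$, since $n_i\le N$ by hypothesis.

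Now apply this at the last step. By minimality of $j$ we have $S_{j-1}<2^N$. Both $S_{j-1}$ and $2^N$ are multiples of $2^{n_j}$, so $S_{j-1}\le 2^N-2^{n_j}$. Hence
\[
S_j=S_{j-1}+2^{n_j}\le 2^N .
\]
Combined with $S_j\ge 2^N$, this gives $S_j=2^N$, as required.

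The main point to get right is the ordering. If we added elements in increasing order, the divisibility argument would fail, because a large term added late could overshoot $2^N$. The hypothesis $n\le N$ is exactly what guarantees $2^{n_j}\mid 2^N$. The case of an empty $A$ cannot occur, since $2^N\ge1$. We treat $A$ as a multiset throughout, so repeated values pose no issue.
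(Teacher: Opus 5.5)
Your proof is correct, and it takes a different route from the paper's.

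The paper argues by induction on $N$. If $N\in A$ it takes $A'=\{N\}$. Otherwise every element is at most $N-1$, so it applies the induction hypothesis for $N-1$ twice: once to extract $A_1$ with $\sum_{n\in A_1}2^n=2^{N-1}$, and once more on $A\setminus A_1$, whose sum is still at least $2^{N-1}$. It then sets $A'=A_1\cup A_2$.

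You instead give a direct, non-inductive construction: sort in non-increasing order and stop at the first prefix whose sum reaches $2^N$. The single observation that $S_{j-1}$ and $2^N$ are both multiples of $2^{n_j}$ rules out overshooting.

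What your version buys:
\begin{itemize}
\item It produces an explicit $A'$, namely a prefix of the largest elements, computable greedily.
\item It makes visible that the only property used is that the values $2^n$ with $n\le N$ form a divisibility chain ending at $2^N$. Hence it applies verbatim to any such chain of denominations.
\end{itemize}

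The paper's halving argument needs no ordering. However, it is tied to splitting $2^N$ into two copies of $2^{N-1}$ (base $k$ would require $k$ inductive calls). It also relies on the unstated step that $N\notin A$ forces all elements to be at most $N-1$, which is exactly what licenses the inductive call. Your argument has no comparable hidden step.
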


\begin{proof}
We proceed by induction on $N$. 

\textit{Base case:} $N = 0$.  
In this case, the only possible elements in $A$ are $0$, and the condition $\sum_{n \in A} 2^n \ge 2^0 = 1$ implies that $A$ contains at least one copy of $0$. Let $A'$ be a multiset consisting of exactly one $0$ from $A$. Then $\sum_{n \in A'} 2^n = 2^0 = 1$, as required.

\textit{Inductive step:} Consider $N > 0$, and assume that the statement holds for all natural numbers less than $N$. 

If $N \in A$, then we may take $A' = \{N\}$ and the result follows immediately. Otherwise, $N \notin A$. By assumption, $\sum_{n \in A} 2^n \ge 2^N \ge 2^{N-1}$. By the induction hypothesis applied to $N-1$, there exists a sub-multiset $A_1 \subseteq A$ such that $\sum_{n \in A_1} 2^n = 2^{N-1}.$ Consider the remaining multiset $A \setminus A_1$. We have 
\[
\sum_{n \in A \setminus A_1} 2^n = \sum_{n \in A} 2^n - \sum_{n \in A_1} 2^n \ge 2^N - 2^{N-1} = 2^{N-1}.
\] 
Applying the induction hypothesis again to $A \setminus A_1$ for $N-1$, we obtain a sub-multiset $A_2 \subseteq A \setminus A_1$ such that $\sum_{n \in A_2} 2^n = 2^{N-1}.$ Setting $A' = A_1 \cup A_2$, we have $\sum_{n \in A'} 2^n = 2^N$, as desired.
\end{proof}

\medskip

We now turn to the proof of the Main Result. Let $C$ be a code. By Theorem~\ref{theorem:correspondence_symmetric_pfc}, it suffices to construct a symmetric tree $T$ such that
\[
|\mathcal{L}_k(T)| = \sum_{\substack{w \in C \\ \lvert w\rvert=k}} 2^{[w]_a}.
\]

\medskip

We proceed by induction on $|C|$. 

\textbf{Base case:} If $|C| = 1$, the statement is trivial.

\medskip

\textbf{Inductive step:} Assume the statement holds for all codes of size less than $|C|$, and let $c_M \in C$ be a word of maximal length $M := \lvert c_M\rvert = \max\{|c| : c \in C\}$.

By the induction hypothesis, there exists a symmetric tree $T_0$ corresponding to the code $C \setminus \{c_M\}$, i.e.,
\[
|\mathcal{L}_k(T_0)| = \sum_{\substack{c \in C \setminus \{c_M\} \\ \lvert c\rvert = k}} 2^{[c]_a}.
\]

Let $Ex_M(T_0)$ denote the symmetric tree obtained by \emph{extending} each leaf of $T_0$ to depth $M$. Formally, for a leaf $v \in T_0$ at distance $k$ from the root, we attach a copy of the 3-ary tree of depth $M-k$ rooted at $v$. Equivalently, in terms of the canonical labeled tree, each leaf $c \in T_0$ is replaced by the set of new leaves
\[
\{ c \cdot w : w \in \{a, a^{-1}, b\}^{M - \lvert c\rvert} \}.
\]

By construction, all leaves of $Ex_M(T_0)$ are at distance $M$ from the root. Moreover, by Lemma~\ref{lemma:mcmillan_3ary}, we have
\[
|\mathcal{L}_M(Ex_M(T_0))| 
= \sum_{c \in C \setminus \{c_M\}} 3^{M-\lvert c\rvert} \cdot 2^{[c]_a} 
= 3^M \sum_{c \in C \setminus \{c_M\}} 3^{-\lvert c\rvert} \cdot 2^{[c]_a} 
\le 3^M - 2^{[c_M]_a}.
\]

Denote the complete 3-ary tree of depth $M$ by $T_M(3)$, and consider now the complement of the extended tree within $T_M(3)$:
\[
Ex_M(T_0)^C := T_M(3) \setminus Ex_M(T_0).
\]
Then
\[
|\mathcal{L}_M(Ex_M(T_0)^C)| = 3^M - |\mathcal{L}_M(Ex_M(T_0))| \ge 2^{[c_M]_a}.
\]

Furthermore, $Ex_M(T_0)^C$ is itself a symmetric tree. Therefore, it corresponds to a subset $S \subseteq \{a, b\}^M$ satisfying
\[
\sum_{w \in S} 2^{[w]_a} \ge 2^{[c_M]_a}.
\]

\medskip 
Observe that if $[w]_a \le [c_M]_a$ for all $w \in S$, then by Lemma~\ref{lemma:subset-sum-exact}, there exists a subset $S' \subseteq S$ such that
\[
\sum_{w \in S'} 2^{[w]_a} = 2^{[c_M]_a}.
\]
Adding the symmetric tree corresponding to $S'$ to $T_0$ then yields the desired symmetric tree for the full code $C$ (i.e. with the desired number of leaves), completing the inductive step. Here, by \say{adding}, we mean forming the symmetric tree corresponding to the union of $S'$ with the prefix free code associated to $T_0$.

\begin{remark}\label{remark: only b case}
    Also observe that, in general, at most one vertex at each depth can satisfy $[w]_a = 0$; furthermore, a code contains at most one word with this property. Thus, if we assume that $[c_M]_a = 0$, it follows, by parity considerations, that there necessarily exists $w \in S$ such that $[w]_a = 0$. In this case, we may take $S' = \{w\}$ as the desired subset.
\end{remark}

If the previous case does not hold, then there exists some $w \in S$ such that $[w]_a > [c_M]_a$. Let $i > [c_M]_a$ be minimal such that there exists $w \in S$ with $[w]_a = i$.

\medskip

Abusing notation and dentifying $T_0$ with the corresponding prefix-free code, we introduce the notation
\[
C_k(j) := \{ w \in T_0 : \lvert w\rvert = k , \; [w]_a = j \}, 
\quad 
w_k(j) := |C_k(j)|, \qquad k \in \mathbb{N} , 0\le j\le k
\]
i.e., $C_k(j)$ is the set of words in $T_0$ of length $k$ with exactly $j$ occurrences of the letter $a$.  Also denote $I := \{ |c| : c \in T_0\}$ to be the set of lengths of words in the prefix free code corresponding to $T_0$.
Observe that
\[
w_k(i) \le \binom{k}{i},
\]
since there are at most $\binom{k}{i}$ sequences of length $k$ with exactly $i$ letters equal to $a$.

\medskip

 Now, suppose there exists some $m \in I$, with $m < M$, such that for some $j$ satisfying
\[
\max(0, [c_M]_a - (M-m)) \le j \le \min(m, [c_M]_a),
\] 
we have 
\[
w_m(j) < \binom{m}{j}.
\]
Then there exists a “missing” word
\[
w \in \{a, b\}^m \setminus C_m(j), \quad \text{with } [w]_a = j.
\] 
We can extend this word to a new word of length $M$ by appending
\[
w' := w \cdot \langle \underbrace{a, \dots, a}_{[c_M]_a - j \text{ times}}, \underbrace{b, \dots, b}_{M-m - ([c_M]_a - j) \text{ times}} \rangle,
\]
which, by construction, has no prefix in $T_0$. Therefore, adding the symmetric tree corresponding to $w'$ to the tree $T_0$  then yields the desired symmetric tree for the full code
$C$, completing the inductive step.

\medskip

\begin{figure}[]
    \centering
    \includegraphics[width=0.9\linewidth]{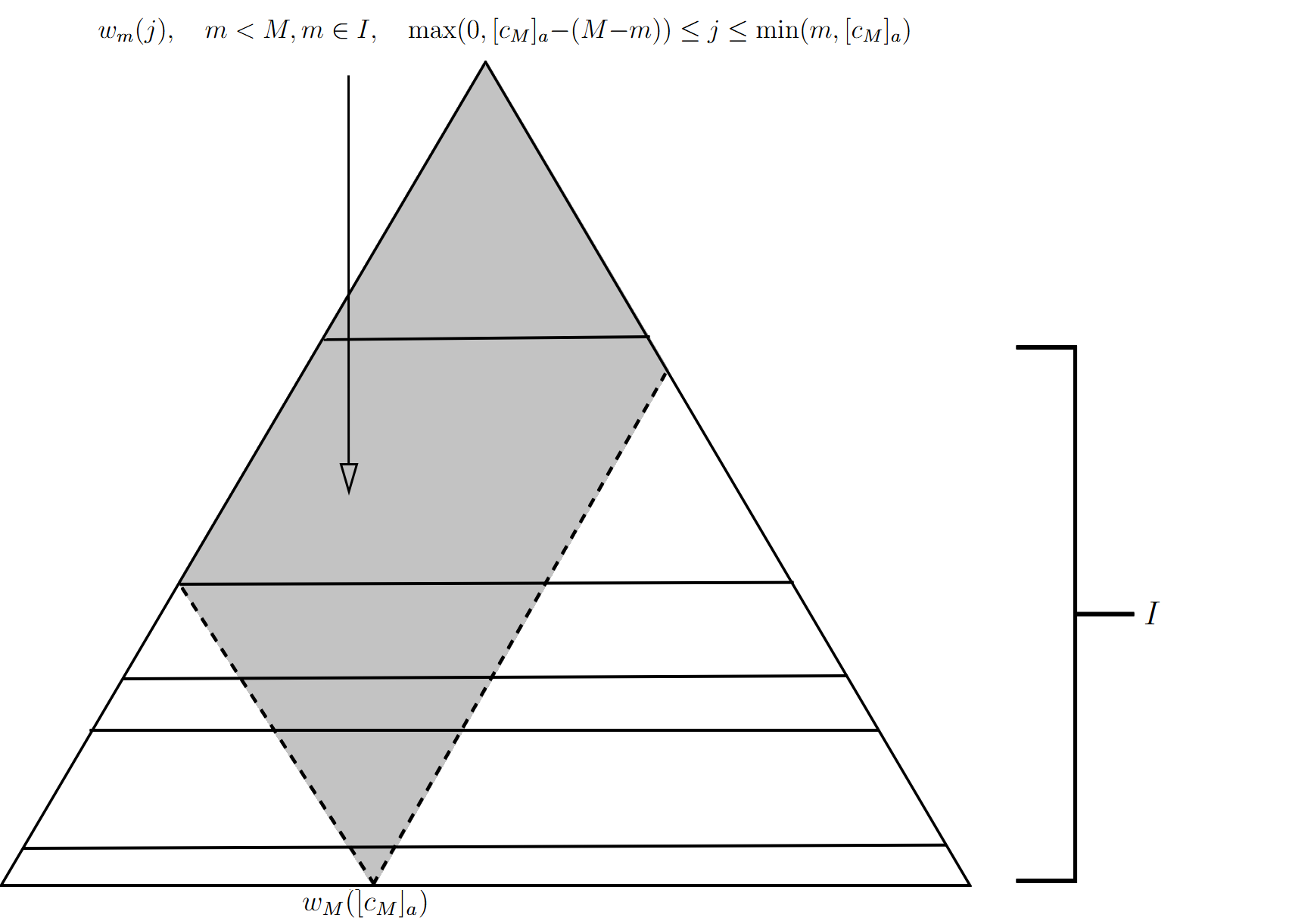}
    \caption{Region in which we seek a word that is neither a prefix of nor has a prefix in the existing code $T_0$.}
    \label{fig:placeholder}
\end{figure}

Otherwise, let $w \in S$ be such that $[w]_a = i$.  
By construction of $S$ there must exist some $m < M$ and a minimal index $j$ satisfying $\max(0, i - (M-m)) \le j \le \min(m, i)$ such that
\[
w_m(j) < \binom{m}{j}.
\]

By assumption, let $j' < j$ be maximal such that 
\[
\max(0, [c_M]_a - (M-m)) \le j' \le \min(m, [c_M]_a),
\] 
and
\[
w_m(j') = \binom{m}{j'}, \quad w_m(j'+1) = \binom{m}{j'+1}, \dots, w_m(j-1) = \binom{m}{j-1}.
\]

In this situation, we may perform the following adjustment: delete two words from $C_m(j-1)$ and add one of the missing elements to $C_m(j)$. Since $w_m(j) < \binom{m}{j}$, the operation is possible, and by construction, the total number of leaves of each depth in the symmetric tree remains unchanged. This can be seen by the fact that the total number of leafs of depth $m\in I$ is
\[
\sum_{k=0}^m 2^k w_m(k).
\] 

We continue this process iteratively. As long as $w_m(j') \ge 2$, the adjustment can be repeated without issue.  

If $w_m(j') = 1$, we must have $j' = 0$. In particular, by the maximality of $j'$, this would imply that $[c_M]_a=0$, which we already discussed in Remark~\ref{remark: only b case}.  

Consequently, after performing the described adjustments, the total number of leaves in the modified tree remains equal to that in $T_0$, and the “vacancy” in $C_m(j)$ has been \say{shifted} into a region where a word containing $[c_M]_a$ occurrences of $a$ can be added. Therefore, the configuration can be resolved as in the previous argument, completing the desired construction. 
$\blacksquare$

\section{Further Discussion}
We note that Theorem~\ref{theorem:sum_of_powers_equivalence} extends to codes over arbitrary finite alphabets. More precisely, we have the following statement.

\medskip
Let $C$ be a finite code over an alphabet $\mathcal{A}$ of size $A$. Then, for every $a \in \mathcal{A}$, there exists a prefix-free code $C'$ over $\mathcal{A}$ such that, for every $k \in \mathbb{N}$ with $k \ge A$\footnote{The restriction $k \ge A$ is imposed only to ensure that the condition $w_k(j) < A$ implies $j = 0$.},
\[
\sum_{\substack{w \in C \\ \lvert w \rvert = k}} A^{\,k - [w]_a}
=
\sum_{\substack{w' \in C' \\ \lvert w' \rvert = k}} A^{\,k - [w']_a}.
\]

We omit the proof in this general setting, as it follows by the same argument as in the binary case, which already captures the essential ideas. For example, in the general setting, Lemma~\ref{lemma:mcmillan_3ary} states that for codes over an alphabet $\mathcal{A}$ of size $A$, one has
\[
\sum_{w \in C} (2A - 1)^{-\lvert w \rvert} \, A^{\lvert w \rvert - [w]_a} \le 1.
\]
This is shown using the same argument presented above (together with the same counting argument as in the proof of Theorem~3 in~\cite{PrefixCodesandFreeGroups}).

\medskip

Also, in light of Remark~\ref{remark:not same cardinality}, it is natural to ask how close, in cardinality, a code can be to a corresponding prefix-free code satisfying the conclusion of Theorem~\ref{theorem:sum_of_powers_equivalence}. More precisely, given a code $C$, can one always find a prefix-free code $C'$ satisfying the conclusion of Theorem~\ref{theorem:sum_of_powers_equivalence} such that $\bigl|\, \lvert C \rvert - \lvert C' \rvert \,\bigr| = O(\lvert C \rvert)? $ or even $\bigl|\, \lvert C \rvert - \lvert C' \rvert \,\bigr| =O(1)?$

\end{document}